\@citea\NAT@hyper@{%
		\NAT@nmfmt{\NAT@nm}%
		\hyper@natlinkbreak{\NAT@aysep\NAT@spacechar}{\@citeb\@extra@b@citeb}%
		\NAT@date}}
\@citea\NAT@nmfmt{\NAT@nm}%
\NAT@spacechar\NAT@hyper@{\NAT@date}}{}{}
\@citea\NAT@hyper@{%
		\NAT@nmfmt{\NAT@nm}%
		\hyper@natlinkbreak{\NAT@spacechar\NAT@@open\if*#1*\else#1\NAT@spacechar\fi}%
		{\@citeb\@extra@b@citeb}%
		\NAT@date}}
\@citea\NAT@nmfmt{\NAT@nm}%
\fi\NAT@hyper@{\NAT@date}}
\newcolumntype{L}[1]{>{\raggedright\let\newline\\\arraybackslash\hspace{0pt}}m{#1}}
\newcolumntype{C}[1]{>{\centering\let\newline\\\arraybackslash\hspace{0pt}}m{#1}}
\newcolumntype{R}[1]{>{\raggedleft\let\newline\\\arraybackslash\hspace{0pt}}m{#1}}
\newcommand*\xbar[1]{%
   \hbox{%
     \vbox{%
       \hrule height 0.5pt 
       \kern0.3ex
       \hbox{%
         \kern-0.25em
         \ensuremath{#1}%
         \kern-0.05em
       }%
     }%
   }%
} 
\newtheorem{theorem}{Proposition}
\newcommand\fnsep{\textsuperscript{,}}
\def\@xfootnote[#1]{%
  \protected@xdef\@thefnmark{#1}%
  \@footnotemark\@footnotetext}
\begin{document}

\renewcommand{\thefootnote}{\alph{footnote}}

\thispagestyle{empty}

\begin{center}
    
    \vspace*{1cm}
    \LARGE{The threshold age of Keyfitz' entropy}	
    \vspace{.4cm}

    \vspace{1cm}
    \large Jos\'e Manuel Aburto\footnote[*]{Correspondence:~\href{mailto:jmaburto@sdu.dk}{jmaburto@sdu.dk}.}\footnote{Interdisciplinary Center on Population Dynamics, University of Southern Denmark, Odense, Denmark.}\fnsep\footnote{Max Planck Institute for Demographic Research, Rostock, Germany.}, Jesus-Adrian Alvarez\textsuperscript{a}, Francisco Villavicencio\footnote{Institute for International Programs, Bloomberg School of Public Health, Johns Hopkins University, Baltimore, MD, United States.}\fnsep\textsuperscript{a}\\ and James W. Vaupel\textsuperscript{a}\fnsep\textsuperscript{b}
    
    \vspace{1cm}
    \large\today
    \vspace{1cm}
       
\end{center}

\renewcommand{\thefootnote}{\arabic{footnote}}
\setcounter{footnote}{0}

\section*{Abstract}
\bigskip

\textbf{BACKGROUND} \\
Indicators of relative inequality of lifespans are important because they capture the dimensionless shape of aging. They are markers of inequality at the population level and express the uncertainty at the time of death at the individual level. In particular, Keyfitz' entropy $\xbar{H}$ represents the elasticity of life expectancy to a change in mortality and it has been used as an indicator of lifespan variation. However, it is unknown how this measure changes over time and whether a threshold age exists, as it does for other lifespan variation indicators.
\bigskip

\noindent\textbf{RESULTS} \\
The time derivative of $\xbar{H}$ can be decomposed into changes in life disparity $e^\dagger$ and life expectancy at birth $e_o$. Likewise, changes over time in $\xbar{H}$ are a weighted average of age-specific rates of mortality improvements. These weights reflect the sensitivity of $\xbar{H}$ and show how mortality improvements can increase (or decrease) the relative inequality of lifespans. Further, we prove that $\xbar{H}$, as well as $e^\dagger$, in the case that mortality is reduced in every age, has a threshold age below which saving lives reduces entropy, whereas improvements above that age increase entropy.
\bigskip

\noindent\textbf{CONTRIBUTION} \\
We give a formal expression for changes over time of $\xbar{H}$ and provide a formal proof of the threshold age that separates reductions and increases in lifespan inequality from age-specific mortality improvements.

\linespread{1.5}\normalsize
\clearpage


\section{Relationship}
Keyfitz' entropy is a dimensionless indicator of the relative variation in the length of life compared to life expectancy \citep{Keyfitz1977, demetrius1978adaptive}. It is usually defined as
\begin{equation*}
\xbar{H}(t)=-\frac{\int_0^\infty\ell(a, t)\,\ln\ell(a, t)\,da}{\int_0^\infty\ell(a, t)\,da}=\int_0^\infty c(a, t)\,H(a,t)\,da=\frac{e^\dagger(t)}{e_o(t)}\;,
\end{equation*}
where $e^\dagger(t)=-\int_0^\infty\ell(a,t)\,\ln\ell(a,t)\,da$ is the life disparity or number of life-years lost as a result of death \citep{Vaupel2003}, $e_o(t)=\int_0^\infty\ell(a, t)\,da$ is the life expectancy at birth at time $t$, $\ell(a,t)$ is the life table survival function, $c(a,t)=\ell(a)\,/\,\int_0^\infty\ell(x)\,dx$ is the population structure, and $H(a,t)=\int_0^a\mu(x,t)\,dx$ is the cumulative hazard to age $a$, where $\mu(x,t)$ is the force of mortality (hazard rate or risk of death) at age $x$ at time $t$. Note that $\xbar{H}(t)$ can be interpreted as an average value of $H(a,t)$ in the population at time $t$.

\cite{Goldman1986} and \cite{Vaupel1986} proved that
$$
e^\dagger(t)=\int_0^\infty d(a, t)\,e(a, t)\,da\;,
$$
where $d(a,t)$ represents the distribution of deaths and $e(a,t)=\int_a^\infty\ell(x,t)\,dx\,/\,\ell(a,t)$ the remaining life expectancy at age $a$. This provides an alternative expression for Keyfitz' entropy:
$$
\xbar{H}(t)=\frac{\int_0^\infty d(a, t)\,e(a, t)\,da}{\int_0^\infty\ell(a, t)\,da}\;.
$$

Let $\dot{\xbar{H}}$ denote the partial derivative of $\xbar{H}$ with respect to time.\footnote{In the following, a dot over a function will denote its partial derivative with respect to time $t$, but variable $t$ will be omitted for simplicity.} We define $\rho(x)=-\dot{\mu}(x)\,/\,\mu(x)$ as the age-specific rates of mortality improvements. Then, the relative derivative of $\xbar{H}$ can be expressed as a weighted average of age-specific rates of mortality improvement, 
\begin{equation}
\dot{\xbar{H}}\,/\,\xbar{H} = \int_0^\infty\rho(x)\,w(x)\,W(x)\,dx\;,
\label{eq:1}
\end{equation}
with weights
$$
w(x)=\mu(x)\,\ell(x)\,e(x)\qquad\text{and}\qquad W(x)=\frac{1}{e^\dagger}\,\big(H(x)+\xbar{H}(x)-1\big)-\frac{1}{e_o}\;.
$$

Function $\xbar{H}(x)$ is Keyfitz' entropy conditioned on surviving to age $x$, defined as
$$
\xbar{H}(x)=\frac{e^\dagger(x)}{e(x)}=\frac{\int_x^\infty d(a)\,e(a)\,da}{\int_x^\infty\ell(a)\,da}\,.
$$
where $e^\dagger(x)=\int_x^\infty d(a)\,e(a)\,da\,/\,\ell(x)$ refers to life disparity above age $x$, and $e(x)$ is the remaining life expectancy at age $x$.

Note that Keyfitz' entropy $\xbar{H}$ is a measure of lifespan inequality. Thus, higher values represent more lifespan disparity, whereas lower values denote less variation of lifespans. If mortality improvements over time occur at all ages, there exists a unique threshold age $a^H$ that separates \emph{positive} from \emph{negative} contributions to Keyfitz' entropy $\xbar{H}$ resulting from those mortality improvements. This threshold age $a^H$ is reached when
\begin{equation}
  H\left(a^H\right)+ \xbar{H}\left(a^H\right) =1 + \xbar{H}\;.
  \label{threhsold.age}
\end{equation}


\section{Proof}

\cite{Fernandez2015} showed that the relative derivative of  $\xbar{H}$ can be expressed as
\begin{equation}
\dot{\xbar{H}}\,/\,\xbar{H}=\frac{\dot{e}^\dagger}{e^\dagger}-\frac{\dot{e}_o}{e_o}\;,
\label{eq:alternative}
\end{equation}
This formula indicates that relative changes in $\xbar{H}$ over time are given by the difference between relative changes in $e^\dagger$ (dispersion component) and relative changes in $e_o$ (translation component). We will first provide expressions for $\dot{e}_o$ and $\dot{e}^\dagger$ to prove that~\eqref{eq:1} and~\eqref{eq:alternative} are equivalent. Next, we will prove the existence of threshold age for $\xbar{H}$ and its uniqueness.

\subsection{Relative changes over time in $\xbar{H}$}

\cite{Vaupel2003} showed that changes over time in life expectancy at birth are a weighted average of the total rates of mortality improvements:
\begin{equation}
\label{ex.derivative}
\dot{e}_o=\int_0^\infty\rho(x)\,w(x)\,dx\;,
\end{equation}
where $\rho(x)=-\dot{\mu}(x)\,/\,\mu(x)$ are the age-specific rates of mortality improvement, and $w(x)=\mu(x)\,\ell(x)\,e(x) = d(x)e(x)$ is a measure of the importance of death at age $x$. 

Since $d(x)=\mu(x)\,\ell(x)$ and $\ell(x)\,e(x)=\int_x^\infty\ell(a)\,da$, the partial derivative with respect to time of $e^\dagger=\int_0^\infty d(a)\,e(a)\,da$ can be expressed as
\begin{equation*}
 \begin{split}
 \dot{e}^\dagger	& = \int_0^\infty\dot{\mu}(x)\,\ell(x)\,e(x)\,dx+\int_0^\infty\mu(x)\int_x^\infty\dot{\ell}(a)\,da\,dx		\\
			& = -\int_0^\infty\rho(x)\,w(x)\,dx+\int_0^\infty\dot{\ell}(a)\int_0^a\mu(x)\,dx\,da                         \\
			& = -\int_0^\infty\rho(x)\,w(x)\,dx+\int_0^\infty\dot{\ell}(a)\,H(a)\,da     \\
			& = -\int_0^\infty\rho(x)\,w(x)\,dx-\int_0^\infty\int_0^a\dot{\mu}(x)\,dx\,\ell(a)\,H(a)\,da\;,
 \end{split}
\end{equation*}
where $H(a)$ is the cumulative hazard to age $a$. By reversing the order of integration and doing some additional manipulations, we get

\begin{equation}
 \begin{split}
 \dot{e}^\dagger	
    & = -\int_0^\infty\rho(x)\,w(x)\,dx-\int_0^\infty\dot{\mu}(x)\int_x^\infty\,\ell(a)\,H(a)\,da\,dx    \\
    & = -\int_0^\infty\rho(x)\,w(x)\,dx+\int_0^\infty\rho(x)\,w(x)\,\frac{\int_x^\infty\,\ell(a)\,H(a)\,da}{\ell(x)\,e(x)}\,dx                                   \\
    & =\int_0^\infty\rho(x)\,w(x)\left(\frac{\int_x^\infty\,\ell(a)\big(H(a)-H(x)+H(x)\big)\,da}{\ell(x)\,e(x)}-1\right)dx                  \\
    & =\int_0^\infty\rho(x)\,w(x)\left(H(x)\,\frac{\int_x^\infty\,\ell(a)\,da}{\ell(x)\,e(x)}+\frac{\int_x^\infty\,\ell(a)\big(H(a)-H(x)\big)\,da}{\ell(x)\,e(x)}-1\right)\,dx                                           \\
    & =\int_0^\infty\rho(x)\,w(x)\left(H(x)+\frac{\int_x^\infty\,\ell(a)\big(H(a)-H(x)\big)\,da}{\ell(x)\,e(x)}-1\right)\,dx\;.
 \label{der.edagger}
 \end{split}
\end{equation}

In Proposition~\ref{prop1} in the Appendix, we prove that
\begin{equation}
e^\dagger(x)=\frac{1}{\ell(x)}\int_x^\infty d(a)\,e(a)\,da=\frac{1}{\ell(x)}\int_x^\infty\,\ell(a)\big(H(a)-H(x)\big)\,da\;.
\label{eq:edaggerNew}
\end{equation}
Replacing~\eqref{eq:edaggerNew} into~\eqref{der.edagger} yields
\begin{equation}
  \begin{split}
    \dot{e}^\dagger
        & = \int_0^\infty\rho(x)\,w(x)\left(H(x)+\frac{e^\dagger(x)}{e(x)}-1\right)dx                 \\
        & = \int_0^\infty\rho(x)\,w(x)\big(H(x)+\xbar{H}(x)-1\big)\,dx   \;.
  \end{split}
  \label{der3.edagger}
\end{equation}
Finally, replacing the expressions of $\dot{e}_o$ and $\dot{e}^\dagger$ from~\eqref{ex.derivative} and~\eqref{der3.edagger} into~\eqref{eq:alternative}, we get 
\begin{equation*}
 \begin{split}
    \dot{\xbar{H}}\,/\,\xbar{H}
        & = \frac{1}{e^\dagger}\int_0^\infty\rho(x)\,w(x)\left(H(x)+\xbar{H}(x)-1\right)dx-\frac{1}{e_o}\int_0^\infty\rho(x)\,w(x)\,dx    \\
        & = \int_0^\infty\rho(x)\,w(x)\left(\frac{1}{e^\dagger}\left(H(x)+\xbar{H}(x)-1\right)-\frac{1}{e_o}\right)dx                               \\
        & = \int_0^\infty\rho(x)\,w(x)\,W(x)\,dx\;,
 \end{split}
\end{equation*}
which proves~\eqref{eq:1} and shows that relative changes over time in Keyfitz' entropy are the average of the rates of mortality improvement weighted by the product $w(x)\,W(x)$.\hfill$\square$

\subsection{The threshold age for $\xbar{H}$}

Using~\eqref{eq:1}, changes over time in Keyfitz' entropy $\xbar{H}$ are given by the function
\begin{equation}
  \dot{\xbar{H}}=\xbar{H}\,\int_0^\infty\rho(x)\,w(x)\,W(x)\,dx\;.
  \label{eq:Hprime}
\end{equation}
If $\dot{\xbar{H}}>0$, lifespan inequality increases over time, whereas $\dot{\xbar{H}}<0$ implies that variation of lifespans decrease over time. Because $\ell(x)$ is a positive function bounded between 0 and 1, Keyfitz' entropy $\xbar{H}>0$. Moreover, assuming age-specific death rates $\mu(x)$ improve over time for all ages, then $\dot{\mu}(x)<0$ and $\rho(x)>0$ at any age $x$. Therefore,~\eqref{eq:Hprime} implies that
\begin{enumerate}
  \item Those ages $x$ in which $w(x)\,W(x)>0$ will contribute \emph{positively} to Keyfitz' entropy $\xbar{H}$ and increase lifespan variation;
  \item Those ages $x$ in which $w(x)\,W(x)<0$ will contribute \emph{negatively} to Keyfitz' entropy $\xbar{H}$ and favor lifespan equality;
  \item Those ages $x$ in which $w(x)\,W(x)=0$ will have no effect on the variation over time of $\xbar{H}$.
\end{enumerate}
Our goal is to prove that if mortality improvements occur for all ages and $\rho(x)>0$, there exists a unique threshold age $a^H$ such that $w\left(a^H\right)\,W\left(a^H\right)=0$. That threshold age will separate \emph{positive} from \emph{negative} contributions to $\xbar{H}$ resulting from mortality improvements. 

The product $w(x)\,W(x)$ can be re-expressed as 
\begin{equation*}
  \begin{split}
    w(x)\,W(x)	& = \mu(x)\,\ell(x)\,e(x)\,\left(\,\frac{1}{e^\dagger}\left(H(x)+\xbar{H}(x)-1\right)-\frac{1}{e_o}\right)	                            \\
                & = \frac{\mu(x)\,\ell(x)\,e(x)}{e^\dagger}\left(H(x)+\xbar{H}(x)-\xbar{H}-1\right)\;.
  \end{split}
  \label{W_2}
\end{equation*}
Since $\mu(x)$, $\ell(x)$, $e(x)$ and $e^\dagger$ are all positive functions, the threshold age of $\xbar{H}$ occurs when
\begin{equation}
\label{g_x}
g(x)= H(x)+\xbar{H}(x)-\xbar{H}-1=0\;.
\end{equation}

When $x$ is close to 0, $g(x)$ takes negative values since
\begin{equation*}
g(0)=H(0)+\xbar{H}^+(0)-\xbar{H}-1=0+\xbar{H}-\xbar{H}-1=-1<0\;.
\end{equation*}
Likewise, $g(x)$ takes positive values when $x$ becomes arbitrary large. Note that $\xbar{H}$ does not depend on age, and therefore
$$
\lim_{x\to\infty}g(x)=\lim_{x\to\infty}\left(H(x)+\xbar{H}(x)\right)=\infty\;,
$$
because $\lim_{x\to\infty}H(x)=\infty$. By definition, $\xbar{H}(x)\geq0$ for all $x$, so regardless of the behavior of $\xbar{H}(x)$ when $x$ is arbitrarily large, the limit of $g(x)$ tends to infinity. Hence, given that $g(0)=-1$ and $\lim_{x\to\infty}g(x)=\infty$, in a continuous framework the intermediate value theorem guarantees the existence of at least one age $a^H$ at which $g(a^H)=0$.

Moreover, as shown in Propostion~\ref{prop2} in the Appendix, $g(x)$ is an increasing function. Therefore, there is a unique threshold age $a^H$ that separates \emph{positive} from \emph{negative} contributions to Keyfitz' entropy $\xbar{H}$, and that threshold age is reached when
\begin{equation*}
w(x)\,W(x)=0\Longleftrightarrow g(x)=0\Longleftrightarrow H(x)+ \xbar{H}(x)= 1+\xbar{H}\;,
\end{equation*}
which proves~\eqref{threhsold.age}.\hfill$\square$


\section{Related results}

Demographers have developed a battery of indicators to measure how lifespans vary in populations \citep{colchero2016emergence,vanRaalte2013}. The most used indexes are the variance \citep{edwards2005inequality, tuljapurkar2011variance}, standard deviation \citep{vanraalte2018Science}, or coefficient of variation \citep{aburto2018potential} of the age at death distribution , the Gini coefficient \citep{Shkolnikov2003,archer2018diet,gigliarano2017longevity}, Theil index \citep{Smits2009} and years of life lost \citep{Vaupel2011, Aburto2018Eastern} among others. However, only few studies have analytically derived formulas for \textit{threshold age} below and above which mortality improvements respectively decrease and increase lifespan variation. \citet{Zhang2009} showed that the threshold age $(a^\dagger)$ for life disparity $(e^\dagger)$ occurs when $H(x)+ \bar{H}(x)=1$. Similarly, \citet{Gillespie2014} determined a threshold age for the variance of the age at death distribution. \Citet{vanRaalte2013} also showed that it is possible to determine the threshold age by performing an empirical sensitivity analysis of lifespan variation indicators.

 In this article, we contribute to the lifespan variation literature by deriving the threshold age $a^H$ for Keyfitz' entropy. This age separates negative from positive contributions of age-specific mortality improvements. We analytically proved its existence and demonstrated in Section \ref{sec:application} that it differs from the threshold age of $e^\dagger$.


\section{Applications}
\label{sec:application}

\subsection{Numerical findings}

Figure~\ref{fig:Fig1} depicts the threshold ages for the two related measures: $e^\dagger$ and $\xbar{H}$. Calculations were performed using data from the \citet{HMD} for females in the United States and Italy in 2005. The blue line represents $g(x)$ from Equation \eqref{g_x}. The threshold age $a^H$ occurs when  $g(x)$ crosses zero. The red and grey line display the same functions that \cite{Zhang2009} used to find the threshold age for $e^\dagger$. The intersection of these two lines denotes the threshold age $a^\dagger$. Finally, the dashed black line depicts the life expectancy at birth. \citet{Vaupel2011} noted that $a^\dagger$ tends to fall just below life expectancy. The threshold age for Keyfitz' entropy $a^H$ is greater than $a^\dagger$ and is very close above life expectancy for these countries. Not the similarity of the formulas for $a^\dagger$ given by $H(a^\dagger)+ \bar{H}(a^\dagger)=1$ and $a^H$ given by $H(a^H)+ \xbar{H}(a^H)= 1+\xbar{H}$.

\begin{figure}[h]
  \centering
  \includegraphics[scale=.72]{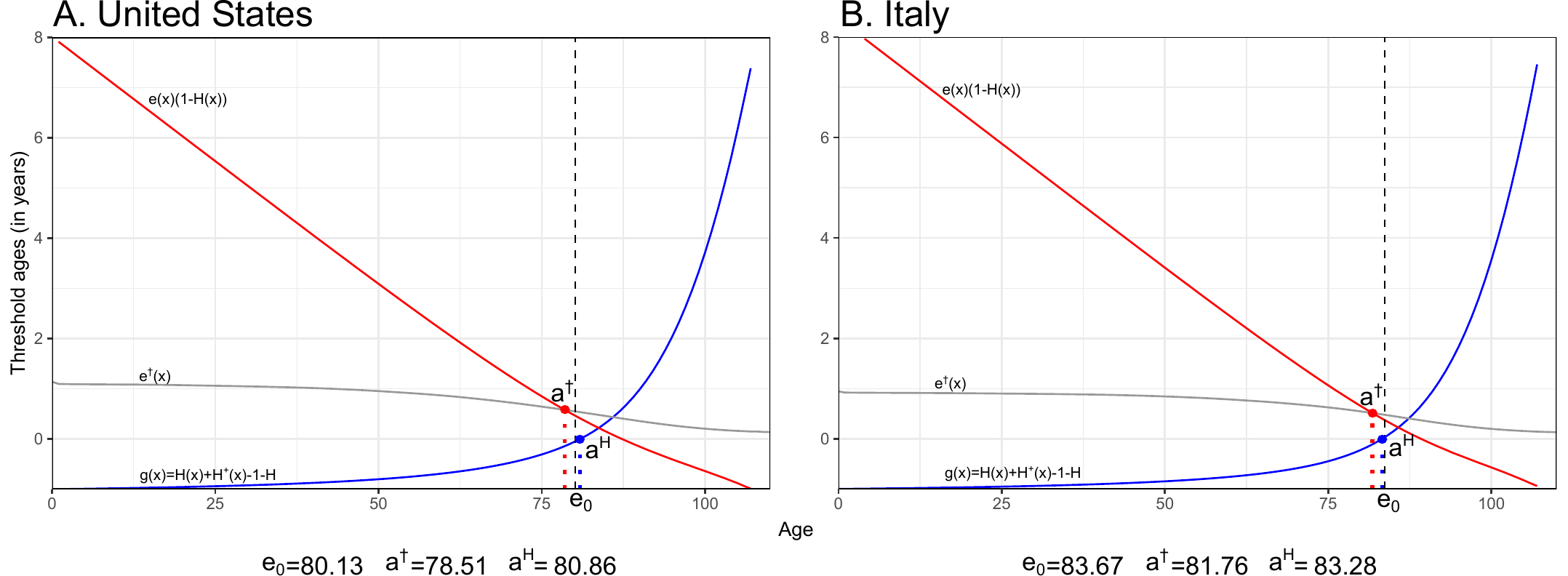}
  \caption{Derivation of threshold ages for $e^\dagger$ ($a^\dagger$) and Keyfitz' entropy ($a^H$) from life table functions for United States and Italy in 2005. Source: \cite{HMD}}
  \label{fig:Fig1}
\end{figure}

Panel A and B of Figure~\ref{fig:Fig2} illustrate the evolution of the threshold ages for $e^\dagger$ and $\xbar{H}$ for females in France and Sweden respectively. We chose these countries because they portray large series of reliable data available through the \cite{HMD}. 

Values for $a^\dagger$ are close to life expectancy throughout the period. However, after around 1950 there is a crossover between $a^\dagger$ and $e_o$ such that $a^\dagger$ remained close to life expectancy but below it. This result shows that the threshold age $a^\dagger$ being below life expectancy is a modern feature of ageing populations with high life expectancy. From the beginning of the period of observation to the 1950s, the threshold age for Keyfitz' entropy was above life expectancy for both countries. During some periods $a^\dagger$ was roughly constant whereas life expectancy trended upwards. After the 1950s, $a^\dagger$ converged towards life expectancy. The code and data to reproduce these results are publicly available through this repository [link not given to avoid identification of authors].

\begin{figure}[h]
  \centering
  \includegraphics[scale=.72]{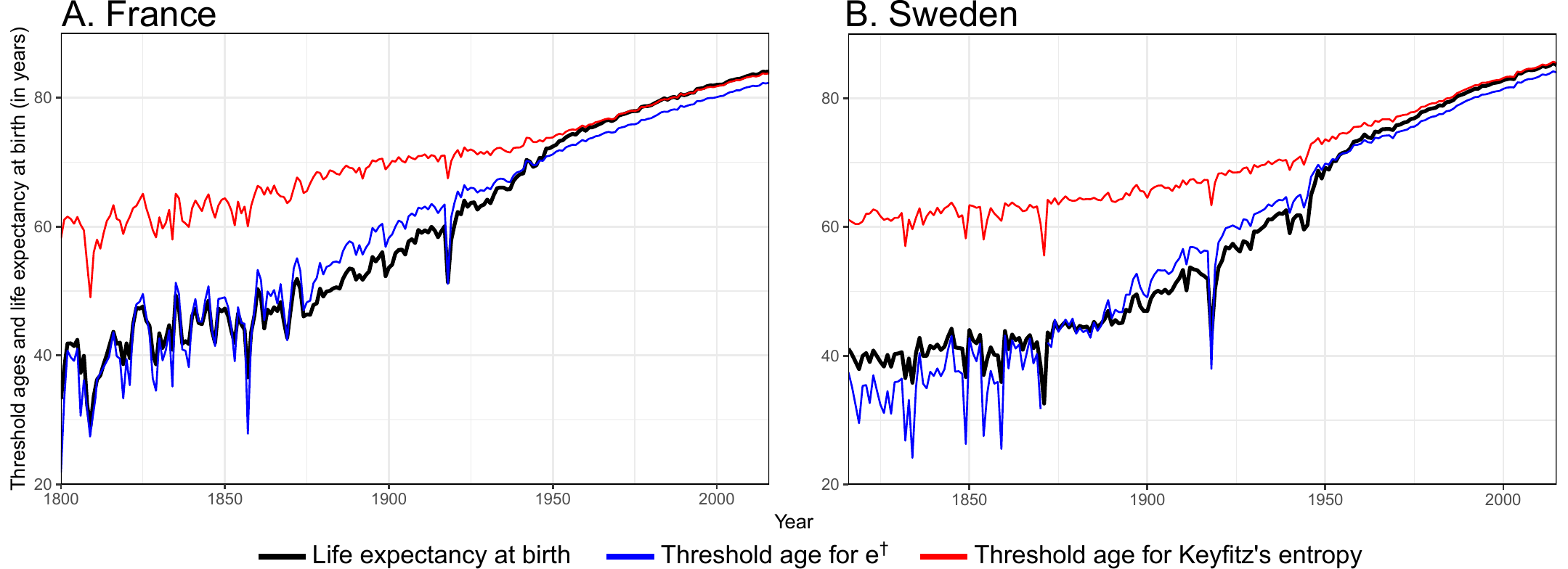}
  \caption{Threshold age for $e^\dagger$ and Keyftiz' entropy $\xbar{H}$ for females in France and Sweden over time. Source: \cite{HMD}}
  \label{fig:Fig2}
\end{figure}

\FloatBarrier

\subsection{Decomposition of the relative derivative of $\xbar{H}$}

The relative derivative of $\xbar{H}$ defined in Equation~\eqref{eq:1} can be decomposed between components before and after the threshold age $a^H$ as follows:
\begin{equation}
 \begin{split}
 \dot{\xbar{H}}\,/\,\xbar{H} &= \int_0^\infty\rho(x)\,w(x)\,W(x)\,dx\;\\
  & = \int_0^{a^H} \rho(x)\,w(x)\,W(x)\,dx + \int_{a^H}^\infty\rho(x)\,w(x)\,W(x)\,dx		 \\
		& = \underbrace{ \left\lbrace \frac{\dot{e}^\dagger[x|x < a^H]}{e^\dagger}- \frac{\dot{e}_o[x|x < a^H]}{e_o}  \right\rbrace }_{\textit{Early life component}} + \underbrace{\left\lbrace \frac{\dot{e}^\dagger[x|x > a^H]}{e^\dagger}-\frac{\dot{e}_o[x|x > a^H]}{e_o} \right\rbrace }_{\textit{Late life component}}
 \end{split}
 \label{Components2}
\end{equation}

If mortality reductions occur at every age, the \textit{early life component} in Equation~\eqref{Components2} is always positive (contributing to reduce entropy) while the \textit{late life component} is negative (contributing to increasing entropy). Thus, it is clear that a negative relationship between life expectancy and entropy over time occurs if the early life component outpaces the late life component. This decomposition is based on the additive properties of the derivatives of life expectancy and $e^\dagger$ as previously shown in \citet{Vaupel2003} and \citet{Fernandez2015}.


\section{Conclusion}
 Several authors have been interested in decomposing changes over time in life expectancy \citep{arriaga1984measuring, Vaupel1986, pollard1988decomposition, Vaupel2003, beltran2008integrated, beltran2011unifying}. Recently, authors have also investigated how life disparity fluctuations over time can be decomposed \citep{Wagner2010, Zhang2009, Shkolnikov2011, Aburto2018Eastern}. In this paper, we bring both perspectives together and shed light on the dynamics behind changes in Keyfitz' entropy.
   
 \cite{Keyfitz1977} first proposed $\xbar{H}$ as a life table function that measures \textit{the change in life expectancy at birth consequent on a proportional change in age-specific rates}. Since then, several authors have been interested in this measure and its use \citep{demetrius1978adaptive, Demetrius1979,mitra1978short,Goldman1986,Vaupel1986,Hakkert1987,hill1993entropy,Fernandez2015}. Keyfitz' entropy has become an appropriate indicator of lifespan variation that permits comparison of the shape of ageing across different species and over time \citep{baudisch2013pace,Wrycza2015}. In this article, we uncover the mathematical regularities behind the changes over time in Keyfitz' entropy. In particular, this study contributes to the existing literature by showing that (1) Keyfitz' entropy can be decomposed as a weighted average of rates of mortality improvements and (2) that there exists a threshold age that separates positive and negative contributions of reductions in mortality over time. 

%

\newpage
\linespread{1}\normalsize

\bibliographystyle{DemRes}
\bibliography{Bib_FormalDemo}

\newpage

\section*{Appendix}

\setcounter{equation}{0}
\renewcommand{\theequation}{A\arabic{equation}}

\begin{theorem}
 Let $e^\dagger(x)=\int_x^\infty d(a)\,e(a)\,da\,/\,\ell(x)$ be a measure of lifespan disparity above age $x$, where $d(a)$ accounts for the distribution of deaths, $e(a)$ the remaining life expectancy at age $a$, and $\ell(x)$ is the probability of surviving from birth to age $x$. Then,
 \begin{equation}
  e^\dagger(x)=\frac{1}{\ell(x)}\int_x^\infty\,\ell(a)\big(H(a)-H(x)\big)\,da\;,
  \label{eq:edaggerAppendix}
 \end{equation}
 where $H(x)$ is the cumulative hazard to age $x$.
 \label{prop1}
\end{theorem}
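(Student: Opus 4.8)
The plan is to establish the identity in \eqref{eq:edaggerAppendix} by starting from the known representation $e^\dagger(x)=\tfrac{1}{\ell(x)}\int_x^\infty d(a)\,e(a)\,da$ (the conditional version of the Goldman--Vaupel formula) and rewriting the numerator so that the integrand becomes $\ell(a)\big(H(a)-H(x)\big)$. The natural route is to use $d(a)=\mu(a)\,\ell(a)$ together with the relation $\ell(a)\,e(a)=\int_a^\infty\ell(u)\,du$, so that
\[
\int_x^\infty d(a)\,e(a)\,da=\int_x^\infty\mu(a)\int_a^\infty\ell(u)\,du\,da .
\]
Then I would reverse the order of integration over the region $\{x\le a\le u<\infty\}$, which turns the inner integral into $\int_x^u\mu(a)\,da=H(u)-H(x)$, giving
\[
\int_x^\infty d(a)\,e(a)\,da=\int_x^\infty\ell(u)\big(H(u)-H(x)\big)\,du .
\]
Dividing by $\ell(x)$ yields \eqref{eq:edaggerAppendix} after renaming $u$ as $a$.

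An alternative, perhaps cleaner, derivation avoids interchanging integrals: write $d(a)=-\dot{\,}$—no, rather use $d(a)=-\frac{\partial}{\partial a}\ell(a)$ and integrate $\int_x^\infty \big(-\ell'(a)\big)\big(e(a)\big)\,da$ by parts, or alternatively start from $e^\dagger(x)=-\tfrac{1}{\ell(x)}\int_x^\infty\ell(a)\ln\tfrac{\ell(a)}{\ell(x)}\,da$ and note that $-\ln\tfrac{\ell(a)}{\ell(x)}=\int_x^a\mu(u)\,du=H(a)-H(x)$, which gives the claim immediately from the definition $e^\dagger(x)=e^\dagger/\ell$ expressed through $\ell$. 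I expect to present the integration-by-parts or the $-\ln\ell$ route as the primary argument since it is the most transparent, and to mention the Fubini argument as a check.

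The main obstacle is justifying the interchange of the order of integration (or the vanishing of the boundary term in the integration by parts) at the upper limit $\infty$. One needs $\ell(a)\to 0$ fast enough that $\ell(a)\,e(a)\to 0$ and that $\int_x^\infty \ell(a)\,H(a)\,da<\infty$; both follow from the standard demographic assumptions that $e_o<\infty$ and $e^\dagger<\infty$, i.e. that survival has a finite mean and finite entropy, which are implicitly in force throughout the paper. I would state these integrability conditions explicitly at the start of the proof, then carry out the computation, and close by remarking that the finiteness of $e^\dagger$ is exactly what makes every manipulation legitimate. \hfill$\square$
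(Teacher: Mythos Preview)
Your Fubini argument is exactly the paper's proof, only traversed in the opposite direction: the paper starts from the right-hand side, expands $H(a)-H(x)=\int_x^a\mu(y)\,dy$, interchanges the order of integration, and then uses $\int_y^\infty\ell(a)\,da=\ell(y)\,e(y)$ together with $d(y)=\mu(y)\,\ell(y)$ to arrive at $e^\dagger(x)$. The paper does not discuss integrability at all; your remarks on the finiteness conditions are a reasonable addition but go beyond what the paper does.

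Your proposed $-\ln\ell$ shortcut is also correct and is indeed the most transparent route, since $H(a)-H(x)=-\ln\big(\ell(a)/\ell(x)\big)$ makes \eqref{eq:edaggerAppendix} immediate. Just note that it tacitly relies on the conditional identity
\[
e^\dagger(x)=-\frac{1}{\ell(x)}\int_x^\infty\ell(a)\,\ln\!\frac{\ell(a)}{\ell(x)}\,da,
\]
which is not the definition adopted in the statement (where $e^\dagger(x)$ is given via $d(a)\,e(a)$); if you lead with this route you should either cite that equivalence or derive it, and deriving it amounts to the same interchange-of-integration step the paper carries out.
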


\begin{proof}
  Note that
  $$
  \frac{1}{\ell(x)}\int_x^\infty\,\ell(a)\big(H(a)-H(x)\big)\,da=\frac{1}{\ell(x)}\int_x^\infty\,\ell(a)\int_x^a\mu(y)\,dy\,da\;,
  $$
  where function $\mu(y)$ is the force of mortality or hazard rate. By reversing the order of integration, and using that $e(y)=\int_y^\infty\ell(a)\,da\,/\,\ell(y)$ and $d(y)=\mu(y)\,\ell(y)$, we get
  \begin{equation*}
    \begin{split}
      \frac{1}{\ell(x)}\int_x^\infty\,\ell(a)\int_x^a\mu(y)\,dy\,da
	  & = \frac{1}{\ell(x)}\int_x^\infty\mu(y)\int_y^\infty\ell(a)\,da\,dy			\\
	  & = \frac{1}{\ell(x)}\int_x^\infty\mu(y)\,\ell(y)\,e(y)\,dy				\\			
	  & = \frac{1}{\ell(x)}\int_x^\infty d(y)\,e(y)\,dy				\\
	  & = e^\dagger(x)\;,
    \end{split}  
  \end{equation*}
  which proves~\eqref{eq:edaggerAppendix}.
\end{proof}
\medskip

\begin{theorem}
  Let $\ell(x)$ be the probability of surviving from birth to age $x$. Let $\xbar{H}$ be Keyfitz' entropy and $\xbar{H}(x)=e^\dagger(x)\,/\,e(x)$ Keyfitz' entropy conditioned on reaching age $x$. Let $H(x)$ be the cumulative hazard to age $x$. Then, $g(x)=H(x)+\xbar{H}(x)-1-\xbar{H}$ is an increasing function.
 \label{prop2}
\end{theorem}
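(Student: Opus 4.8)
The plan is to show directly that $g'(x)>0$ at every age $x$ (in the range where mortality is not yet exhausted), which is more than enough for monotonicity. Since $\xbar{H}$ and the constant $1$ do not depend on age and $H'(x)=\mu(x)$, we have $g'(x)=\mu(x)+\frac{d}{dx}\xbar{H}(x)$, so everything reduces to computing the age-derivative of the conditional entropy $\xbar{H}(x)=e^\dagger(x)/e(x)$. I expect to arrive at the clean identity $\frac{d}{dx}\xbar{H}(x)=-\mu(x)+\xbar{H}(x)/e(x)$, whence $g'(x)=\xbar{H}(x)/e(x)$, which is positive because $e(x)>0$ and $e^\dagger(x)>0$.

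To assemble the two ingredients, first I would differentiate the identity $\ell(x)\,e(x)=\int_x^\infty\ell(a)\,da$ and use $\ell'(x)=-\mu(x)\,\ell(x)$; this gives the classical relation $e'(x)=\mu(x)\,e(x)-1$. Second --- and this is the step that requires care --- I would use the representation of $e^\dagger(x)$ from Proposition~\ref{prop1}, $\ell(x)\,e^\dagger(x)=\int_x^\infty\ell(a)\big(H(a)-H(x)\big)\,da$, and differentiate both sides with respect to $x$. On the right, the Leibniz rule produces a boundary term that vanishes (the integrand equals $0$ at $a=x$) together with the contribution from the explicit $x$ inside the integrand, namely $-H'(x)\int_x^\infty\ell(a)\,da=-\mu(x)\,\ell(x)\,e(x)$; on the left the product rule and $\ell'(x)=-\mu(x)\ell(x)$ give $\ell'(x)\,e^\dagger(x)+\ell(x)\,\frac{d}{dx}e^\dagger(x)$. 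Equating and dividing by $\ell(x)$ yields $\frac{d}{dx}e^\dagger(x)=\mu(x)\big(e^\dagger(x)-e(x)\big)$. Feeding $e'(x)$ and $\frac{d}{dx}e^\dagger(x)$ into the quotient rule for $\xbar{H}(x)=e^\dagger(x)/e(x)$, the cross-terms $\mu(x)\,e^\dagger(x)\,e(x)$ cancel and one is left with $\frac{d}{dx}\xbar{H}(x)=\big(e^\dagger(x)-\mu(x)\,e(x)^2\big)/e(x)^2=-\mu(x)+\xbar{H}(x)/e(x)$, as announced.

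Combining the pieces, $g'(x)=\mu(x)+\frac{d}{dx}\xbar{H}(x)=\xbar{H}(x)/e(x)>0$, so $g$ is (strictly) increasing. The only genuine obstacle is the middle step: differentiating the integral representation of $e^\dagger(x)$ correctly, since $x$ appears both as the lower limit of integration and inside the integrand, and then verifying that the algebra collapses as claimed. I would also record at the outset the mild regularity assumptions that make these manipulations legitimate (for instance $\ell$ absolutely continuous, $\mu$ continuous and locally integrable, and $e_o<\infty$ so that all the integrals converge), and note that $g'(x)$ could in principle vanish only when $\xbar{H}(x)=0$, i.e.\ when no mortality remains above age $x$; this degenerate case does not arise for realistic mortality schedules and in any event does not disturb monotonicity on the interval where the threshold age lies.
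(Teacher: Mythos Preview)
Your proposal is correct and follows essentially the same approach as the paper: compute $g'(x)$ by differentiating $H(x)$ and $\xbar{H}(x)=e^\dagger(x)/e(x)$, obtain $e'(x)=\mu(x)e(x)-1$ and $(e^\dagger)'(x)=\mu(x)\big(e^\dagger(x)-e(x)\big)$, apply the quotient rule, and conclude $g'(x)=e^\dagger(x)/e(x)^2=\xbar{H}(x)/e(x)\geq 0$. The only cosmetic difference is that you differentiate the Proposition~\ref{prop1} representation of $e^\dagger(x)$ whereas the paper differentiates the definition $e^\dagger(x)=\ell(x)^{-1}\int_x^\infty d(a)\,e(a)\,da$ directly; both routes yield the same formula for $(e^\dagger)'(x)$ with comparable effort.
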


\begin{proof}
In order to demonstrate that $g(x)$ is an increasing function it is sufficient to show that its first derivative is always positive. Hence, we must prove that
\begin{equation}
 \frac{\partial}{\partial x}\,g(x)=\frac{\partial}{\partial x}\left(H(x)+\xbar{H}(x)-1-\xbar{H}\right)=\frac{\partial}{\partial x}\,H(x)+\frac{\partial}{\partial x}\xbar{H}(x)\geq0
 \label{eq:prop2}
\end{equation}
for all ages $x$. 

By the fundamental theorem of calculus,

\begin{equation}
  \frac{\partial}{\partial x}\,H(x) = \frac{\partial }{\partial x} \int_0^x\mu(a)\,da =\mu(x)\;,
  \label{Cumhaz.derv}
\end{equation}
whereas
\begin{equation*}
  \frac{\partial}{\partial x}\,\xbar{H}(x)= \frac{\partial}{\partial x}\,\left(\frac{e^\dagger(x)}{e(x)}\right)= \frac{1}{e(x)^2}\left(e(x)\,\frac{\partial}{\partial x}\,e^\dagger(x)-e^\dagger(x)\,\frac{\partial}{\partial x}\,e(x)\right)\;.
\end{equation*}

First, note that
\begin{equation}
 \begin{split}
  \frac{\partial}{\partial x}\,e^\dagger(x)	    
            & = \frac{\partial}{\partial x}\left(\frac{1}{\ell(x)}\int_x^\infty d(a)\,e(a)\,da\right)   \\
            & = \frac{1}{\ell(x)^2}\left(\ell(x)\,\frac{\partial}{\partial x}\left(\int_x^\infty d(a)\,e(a)\,da\right)-\int_x^\infty d(a)\,e(a)\,da\,\frac{\partial}{\partial x}\,\ell(x)\right)                                    \\
            & = \frac{1}{\ell(x)^2}\left(\ell(x)\,\big(-d(x)\,e(x)\big)-\int_x^\infty d(a)\,e(a)\,da\,\big(-\mu(x)\,\ell(x)\big)\right)                         \\
            & = -\frac{\mu(x)\,\ell(x)\,e(x)}{\ell(x)}+\mu(x)\,\frac{\int_x^\infty d(a)\,e(a)\,da}{\ell(x)}                                    \\
            & = \mu(x)\left(e^\dagger(x)-e(x)\right)\;.
 \end{split}
 \label{edagger.derv}
\end{equation}
On the other hand,
\begin{equation}
  \begin{split}
  \frac{\partial}{\partial x}\,e(x) 
        & = \frac{\partial }{\partial x}\left(\frac{1}{\ell(x)}\int_x^\infty\ell(a)\,da\right)    \\
        & = \frac{1}{\ell(x)^2}\,\left(\ell(x)\,\frac{\partial }{\partial x}\left(\int_x^\infty\ell(a)\,da\right)-\int_x^\infty\ell(a)\,da\,\frac{\partial }{\partial x}\,\ell(x)\right) \\
        & = \frac{1}{\ell(x)^2}\,\left(\ell(x)\,\big(-\ell(x)\big)-\int_x^\infty\ell(a)\,da\,\big(-\mu(x)\,\ell(x)\big)\right)                                  \\
        & = e(x)\,\mu(x)-1\;.
  \end{split}
  \label{ex.derv}
\end{equation}
Therefore, using~\eqref{edagger.derv} and~\eqref{ex.derv}, we get
\begin{equation}
  \begin{split}
    \frac{\partial}{\partial x}\,\xbar{H}(x)
      & = \frac{1}{e(x)^2}\,\bigg(e(x)\,\mu(x)\left(e^\dagger(x)-e(x)\right)-e^\dagger(x)\big(e(x)\,\mu(x)-1\big)\bigg)                \\
      & = \frac{1}{e(x)^2}\left(e^\dagger(x)\,e(x)\,\mu(x)-e(x)^2\,\mu(x)-e^\dagger(x)\,e(x)\,\mu(x)+e^\dagger(x)\right)                              \\
      & = \frac{e^\dagger(x)}{e(x)^2}-\mu(x)\;.      
  \end{split}
  \label{eq:Hplus.derv}
\end{equation}

Finally, replacing~\eqref{Cumhaz.derv} and~\eqref{eq:Hplus.derv} in~\eqref{eq:prop2} yields
\begin{equation*}
 \frac{\partial}{\partial x}\,g(x)=\mu(x)+\frac{e^\dagger(x)}{e(x)^2}-\mu(x)=\frac{e^\dagger(x)}{e(x)^2}\geq0\;,
\end{equation*}
for all ages $x$, which proves that $g(x)$ is an increasing function.
\end{proof}

\end{document}